\theoremstyle{definition}
\newtheorem{definition}{Definition}[section]
\theoremstyle{plain}
\newtheorem{theorem}[definition]{Theorem}
\newtheorem{proposition}[definition]{Proposition}
\theoremstyle{remark}
\newtheorem{remark}[definition]{Remark}
\newcommand{\Feas}{\mathrm{Feas}}
\newcommand{\WC}{\mathsf{WC}}
\title{Feasibility Preservation under Monotone Retrieval Truncation}
\author{Sean Plummer\\
Sr. AI Researcher, BambooHR}
\date{\today}
\begin{document}
\maketitle

\begin{abstract}
Retrieval-based systems approximate access to a corpus by exposing only a truncated subset of available evidence.
Even when relevant information exists in the corpus, truncation can prevent compatible evidence from co-occurring, leading to failures that are not captured by relevance-based evaluation.
This paper studies retrieval from a structural perspective, modeling query answering as a feasibility problem under truncation.

We formalize retrieval as a sequence of candidate evidence sets and characterize conditions under which feasibility in the limit implies feasibility at finite retrieval depth.
We show that monotone truncation suffices to guarantee finite witnessability for individual queries.
For classes of queries, we identify finite generation of witness certificates as the additional condition required to obtain a uniform retrieval bound, and we show that this condition is necessary.
We further exhibit sharp counterexamples demonstrating failure under non-monotone truncation, non-finitely-generated query classes, and purely slotwise coverage.

Together, these results isolate feasibility preservation as a correctness criterion for retrieval independent of relevance scoring or optimization, and clarify structural limitations inherent to truncation-based retrieval.
\end{abstract}

\section{Introduction}

Retrieval-based systems approximate access to large corpora by exposing only a limited subset of available evidence.
This truncation is unavoidable in practice, yet it introduces a structural source of failure: even when all information required to answer a query exists in the corpus, truncation may prevent compatible pieces of evidence from appearing together.
Such failures are commonly attributed to relevance or ranking errors, but they can arise even when retrieved items are individually appropriate.

This paper studies retrieval from a structural perspective.
Rather than asking whether retrieved evidence is relevant, we ask whether truncation preserves the existence of a feasible answer.
We treat query answering as a feasibility problem: a query is answerable if there exists a finite collection of evidence items that jointly satisfies the query’s constraints.
Retrieval is modeled as a truncation of the available evidence, and correctness is evaluated by whether feasibility is preserved under this truncation.

Within this framework, we identify simple structural conditions governing feasibility preservation.
First, we show that monotone truncation—where retrieved evidence is only added and never removed—suffices to guarantee finite witnessability for individual queries.
If a query is feasible in the limit, then under monotone retrieval a feasible witness must appear at some finite retrieval depth.
This property holds independently of relevance scoring, optimization objectives, or probabilistic assumptions.

For classes of queries, finite witnessability alone is insufficient.
We show that uniform retrieval bounds require an additional structural condition: finite generation of witness certificates.
When the set of certificates required across a query class is finite, a single retrieval depth suffices to validate all feasible queries in the class.
Conversely, we show that without finite generation, uniform bounds fail even under monotone retrieval.
This separation mirrors classical distinctions between compactness and Noetherianity, where existence of finite witnesses does not imply uniform bounds across families.

We further establish sharp boundaries for feasibility preservation.
We construct counterexamples showing that non-monotone truncation can destroy feasibility even when a feasible witness exists in the limit, that non-finitely-generated query classes preclude uniform retrieval budgets, and that local coverage of individual query slots does not guarantee global feasibility.
These examples demonstrate that feasibility is inherently relational and cannot, in general, be reduced to independent per-slot criteria.

The results presented here do not address the algorithmic complexity of solving general constraint satisfaction problems.
Instead, they isolate correctness conditions for truncation-based retrieval.
By separating feasibility preservation from relevance optimization, this work provides a structural account of retrieval failure modes, complementary to statistical and algorithmic analyses.

\section{Problem: Feasibility Preservation under Retrieval Truncation}

We study retrieval as a process that incrementally exposes a subset of available evidence, rather than as a mechanism for ranking or scoring documents.
The central question of this paper is not whether retrieved evidence is relevant, but whether truncation of the available evidence preserves the existence of a valid answer.

\paragraph{Evidence and queries.}
Let $U$ denote a universe of atomic evidence items.
A query $q$ specifies a finite set of informational requirements that must be jointly satisfied in order for the query to be answerable.
We model these requirements abstractly as constraints over $U$, without committing to any particular representation (e.g., documents, passages, or facts).

An answer to $q$ is witnessed by a finite collection of evidence items that together satisfy the query constraints.
We refer to such a collection as a \emph{feasible witness} for $q$.

\paragraph{Retrieval as truncation.}
A retrieval system does not expose the full universe $U$ at once.
Instead, it produces a sequence of increasingly large candidate sets
\[
D(1), D(2), \dots \subseteq U,
\]
where $D(k)$ represents the evidence made available at retrieval depth $k$. We do not assume this sequence is monotone unless stated explicitly. We interpret retrieval as a truncation of the available evidence, with $k$ serving as a resource or budget parameter.

We write
\[
D(\infty) := \bigcup_{k \ge 1} D(k)
\]
for the limiting set of evidence that would be available in the absence of truncation.

\paragraph{Feasibility under truncation.}
A query $q$ is said to be \emph{feasible at depth $k$} if there exists a feasible witness for $q$ that is contained entirely within $D(k)$.
Likewise, $q$ is feasible in the limit if it admits a feasible witness contained in $D(\infty)$.

In many retrieval-based systems, failures occur despite the existence of relevant information in the corpus.
From this perspective, such failures can be understood as violations of the following implication:
\[
\text{feasible in the limit} \;\;\Longrightarrow\;\; \text{feasible at some finite depth}.
\]

\paragraph{Problem statement.}
The goal of this paper is to characterize structural conditions on retrieval procedures under which feasibility is preserved under truncation.
More precisely, we ask:

\begin{quote}
\emph{Under what conditions on the sequence $\{D(k)\}$ does feasibility of a query in the limit guarantee feasibility at finite retrieval depth, and when can such a depth be chosen uniformly over classes of queries?}
\end{quote}

We emphasize that our focus is on \emph{existence of feasible witnesses}, not on ranking, scoring, or optimization of answers.
The results that follow identify simple structural properties of retrieval truncation that suffice to guarantee finite witnessability, as well as sharp boundaries where such guarantees provably fail.

\section{Model and Definitions}

This section formalizes the objects introduced informally in Section~2.
All subsequent results are stated and proved with respect to this model.

\subsection{Queries, Evidence, and Compatibility}

Let $U$ be a finite universe of atomic evidence items.
A query $q$ specifies a finite collection of informational requirements that must be satisfied jointly.

We model a query $q$ by:
\begin{itemize}[leftmargin=2em]
  \item a finite index set of slots $V(q) = \{1,\dots,m_q\}$,
  \item for each slot $i \in V(q)$, a set of admissible evidence items $A_i(q) \subseteq U$,
  \item a compatibility relation
  \[
  \mathcal{R}_q \subseteq A_1(q) \times \cdots \times A_{m_q}(q).
  \]
\end{itemize}

A \emph{witness} for query $q$ is a tuple
\[
(a_1,\dots,a_{m_q}) \in \mathcal{R}_q.
\]
Intuitively, a witness selects one admissible evidence item per slot in a way that satisfies all cross-slot constraints.

\begin{remark}
The compatibility relation $\mathcal{R}_q$ allows arbitrary finite-arity constraints and need not decompose into unary or pairwise conditions.
All results in this paper concern existence of witnesses and do not assume any particular logical form of $\mathcal{R}_q$ beyond finitarity.
\end{remark}

\subsection{Retrieval as Truncation}

A retrieval procedure induces a sequence of candidate evidence sets
\[
D(1), D(2), \dots \subseteq U,
\]
where $D(k)$ denotes the evidence made available at retrieval depth $k$.

We define the limiting candidate set
\[
D(\infty) := \bigcup_{k \ge 1} D(k).
\]

\begin{definition}[Slotwise Domains]
For a query $q$ and retrieval depth $k$, the induced domain for slot $i \in V(q)$ is
\[
D_i(q,k) := A_i(q) \cap D(k).
\]
\end{definition}

\begin{definition}[Feasibility]
A query $q$ is \emph{feasible at depth $k$}, denoted $\Feas(q,k)$, if there exists a witness
\[
(a_1,\dots,a_{m_q}) \in \mathcal{R}_q
\]
such that $a_i \in D_i(q,k)$ for all $i \in V(q)$.

Similarly, $q$ is \emph{feasible in the limit}, denoted $\Feas(q,\infty)$, if feasibility holds with respect to $D(\infty)$.
\end{definition}

\subsection{Witness Certificates}

To reason about feasibility abstractly, we associate each query with a finite certificate of evidence items.

\begin{definition}[Witness Certificate]
A \emph{witness certificate} for a query $q$ is a finite subset
\[
\WC(q) \subseteq U
\]
whose joint availability suffices to certify feasibility.
\end{definition}

The following soundness assumption links obligation sets to feasibility.

\begin{definition}[Certificate Soundness]
A witness certificate assignment $\WC(\cdot)$ is \emph{sound} if for all queries $q$ and depths $k$,
\[
\WC(q) \subseteq D(k) \;\Longrightarrow\; \Feas(q,k).
\]
\end{definition}

\begin{remark}
Witness certificates are \emph{sufficient} but not necessarily minimal or unique. Soundness and limit completeness together ensure that certificates act as finite feasibility witnesses without requiring necessity or minimality.
\end{remark}

\begin{definition}[Limit Completeness]
A witness certificate assignment $\WC(\cdot)$ is \emph{limit-complete} if for all queries $q$,
\[
\Feas(q,\infty) \;\Longrightarrow\; \WC(q) \subseteq D(\infty).
\]
\end{definition}

\subsection{Noetherian Retrieval}

We now formalize the notion of feasibility preservation under truncation.

\begin{definition}[Monotone Retrieval]
A retrieval procedure is \emph{monotone} if
\[
D(k) \subseteq D(k+1) \quad \text{for all } k \ge 1.
\]
\end{definition}

\begin{definition}[Noetherian Retrieval (NR)]
A retrieval procedure satisfies \emph{Noetherian Retrieval} if, for every query $q$,
\[
\Feas(q,\infty) \;\Longrightarrow\; \exists K \ge 1 \text{ such that } \Feas(q,K).
\]
\end{definition}

\begin{definition}[Uniform Noetherian Retrieval (UNR)]
Let $\mathcal{Q}$ be a class of queries.
A retrieval procedure satisfies \emph{Uniform Noetherian Retrieval} over $\mathcal{Q}$ if
\[
\Big( \forall q \in \mathcal{Q},\; \Feas(q,\infty) \Big)
\;\Longrightarrow\;
\exists K \ge 1 \text{ such that } \forall q \in \mathcal{Q},\; \Feas(q,K).
\]
\end{definition}

These definitions separate per-query finite witnessability from the stronger requirement of a uniform retrieval budget over a class of queries.

\section{Main Results}

\subsection{Monotone Retrieval and Finite Witnessability}

We begin with the simplest form of feasibility preservation, which concerns individual queries rather than classes of queries.

Intuitively, if a query admits a feasible witness in the limit and retrieval proceeds by monotonically expanding the available evidence, then all evidence items required by a witness must appear at finite retrieval depth. We assume retrieval procedures define $D(k)$ for all $k \ge 1$.

The following theorem formalizes this observation. 

\begin{theorem}[Monotone Retrieval Implies Noetherian Retrieval]
\label{thm:monotone-implies-nr}
If a retrieval procedure is monotone, then it satisfies Noetherian Retrieval.
That is, for every query $q$,
\[
\Feas(q,\infty) \;\Longrightarrow\; \exists K \ge 1 \text{ such that } \Feas(q,K).
\]
\end{theorem}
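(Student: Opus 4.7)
The plan is to unpack $\Feas(q,\infty)$ into a concrete witness tuple, locate each coordinate at some finite retrieval depth using the union definition of $D(\infty)$, and then collapse all of those per-coordinate depths into a single depth $K$ by taking a maximum over the finite slot set, invoking monotonicity to ensure that this maximum dominates all of them. The argument is essentially a compactness-style observation: a finite tuple can be captured by a finite stage of an increasing sequence of sets.

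Concretely, I would first fix a query $q$ satisfying $\Feas(q,\infty)$ and, by definition of feasibility in the limit, extract a witness tuple $(a_1,\ldots,a_{m_q}) \in \mathcal{R}_q$ with $a_i \in D_i(q,\infty) = A_i(q) \cap D(\infty)$ for every $i \in V(q)$. Since $D(\infty) = \bigcup_{k \ge 1} D(k)$, for each slot $i$ I can choose an index $k_i \ge 1$ with $a_i \in D(k_i)$. Then I would set $K := \max_{i \in V(q)} k_i$, which is well defined because $V(q)$ is finite. Monotonicity yields $D(k_i) \subseteq D(K)$ for every $i$, so $a_i \in A_i(q) \cap D(K) = D_i(q,K)$, and the same tuple $(a_1,\ldots,a_{m_q}) \in \mathcal{R}_q$ now certifies $\Feas(q,K)$.

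There is no substantive obstacle here; the work is really bookkeeping about which hypothesis is used where. Finiteness of $V(q)$ (built into the query model) is what makes the maximum exist, and monotonicity is what allows a single $K$ to absorb all the per-slot depths $k_i$. I would flag this in closing, since the theorem's sharpness comes from dropping either assumption: without monotonicity, the item $a_i$ witnessed at depth $k_i$ may be gone by depth $K$, and this is precisely the failure mode the subsequent non-monotone counterexamples are designed to exhibit.
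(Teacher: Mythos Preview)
Your proof is correct and follows essentially the same approach as the paper's: extract a witness from $\Feas(q,\infty)$, locate each coordinate at some finite depth $k_i$, and take $K = \max_i k_i$ using monotonicity. If anything, you are slightly more explicit than the paper in tracking the slotwise domains $D_i(q,K)$ and in noting that finiteness of $V(q)$ is what makes the maximum well defined.
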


\begin{proof}
Assume $\Feas(q,\infty)$.
Then there exists a witness
\[
(a_1,\dots,a_{m_q}) \in \mathcal{R}_q
\]
such that $a_i \in D(\infty)$ for all $i \in V(q)$.

By definition of $D(\infty)$, for each $i$ there exists $k_i \ge 1$ such that $a_i \in D(k_i)$.
Let
\[
K := \max_{i \in V(q)} k_i.
\]
By monotonicity of retrieval, $D(k_i) \subseteq D(K)$ for all $i$, hence $a_i \in D(K)$ for every slot.

Therefore, $(a_1,\dots,a_{m_q})$ is a witness contained entirely in $D(K)$, and $\Feas(q,K)$ holds.
\end{proof}

\begin{remark}
Theorem~\ref{thm:monotone-implies-nr} does not rely on witness certificates or certificate soundness.
It is a purely structural consequence of monotone truncation and finitary witnesses.
\end{remark}

\subsection{Uniform Noetherian Retrieval}

Theorem~\ref{thm:monotone-implies-nr} guarantees finite witnessability on a per-query basis.
We now consider the stronger question of whether a \emph{single} retrieval depth can suffice to validate an entire class of queries.

Uniform bounds of this kind are not guaranteed in general and require additional structure.
We show that finite-generation conditions on witness certificates are sufficient to obtain such bounds.

\paragraph{Finite generation of witness certificates.}
Let $\mathcal{Q}$ be a class of queries.
Recall that each query $q \in \mathcal{Q}$ is associated with a witness certificate $\WC(q) \subseteq U$.

We define the \emph{basis} of the query class $\mathcal{Q}$ by
\[
B(\mathcal{Q}) := \bigcup_{q \in \mathcal{Q}} \WC(q).
\]

\begin{definition}[Finite Generation]
A query class $\mathcal{Q}$ is \emph{finitely generated} if there exists a finite set $G \subseteq U$ such that
\[
\WC(q) \subseteq G \quad \text{for all } q \in \mathcal{Q}.
\]
\end{definition}

Equivalently, $\mathcal{Q}$ is finitely generated if $B(\mathcal{Q})$ is finite.

\paragraph{Uniform feasibility under truncation.}
We now state the main uniformity result.

\begin{theorem}[Finite Generation Implies Uniform Noetherian Retrieval]
\label{thm:finite-gen-implies-unr}
Let $\mathcal{Q}$ be a class of queries.
Suppose that:
\begin{enumerate}[label=(\roman*)]
  \item the retrieval procedure is monotone,
 \item the witness certificate assignment $\WC(\cdot)$ is sound and limit-complete,
  \item $\mathcal{Q}$ is finitely generated, and
  \item for every $q \in \mathcal{Q}$, $\Feas(q,\infty)$ holds.
\end{enumerate}
Then there exists a retrieval depth $K \ge 1$ such that
\[
\Feas(q,K) \quad \text{for all } q \in \mathcal{Q}.
\]
That is, the retrieval procedure satisfies Uniform Noetherian Retrieval over $\mathcal{Q}$.
\end{theorem}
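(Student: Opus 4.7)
The plan is to use finite generation to reduce the uniform bound to a finite maximum over a single set of evidence items, then apply certificate soundness to transfer availability back to feasibility. The argument parallels the proof of Theorem~\ref{thm:monotone-implies-nr}, but at the level of certificates rather than individual witnesses.

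First I would invoke limit completeness. Since every $q \in \mathcal{Q}$ satisfies $\Feas(q,\infty)$, we get $\WC(q) \subseteq D(\infty)$ for all $q \in \mathcal{Q}$. Taking the union over the class yields $B(\mathcal{Q}) = \bigcup_{q \in \mathcal{Q}} \WC(q) \subseteq D(\infty)$. By finite generation, there exists a finite $G \subseteq U$ with $\WC(q) \subseteq G$ for all $q$; without loss of generality I replace $G$ by $G \cap D(\infty)$, which still contains every $\WC(q)$ and is still finite.

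Next I would produce the uniform depth. For each $u \in G$, since $u \in D(\infty) = \bigcup_{k \ge 1} D(k)$, choose $k_u \ge 1$ with $u \in D(k_u)$. Because $G$ is finite, the quantity
\[
K := \max_{u \in G} k_u
\]
is a well-defined positive integer. By monotonicity, $D(k_u) \subseteq D(K)$ for each $u \in G$, so $G \subseteq D(K)$, and in particular $\WC(q) \subseteq G \subseteq D(K)$ for every $q \in \mathcal{Q}$.

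Finally I would apply certificate soundness: $\WC(q) \subseteq D(K)$ implies $\Feas(q,K)$, uniformly in $q \in \mathcal{Q}$, which is exactly Uniform Noetherian Retrieval. The step I expect to be load-bearing rather than difficult is the passage from the per-item indices $k_u$ to a single $K$ via $\max$; this is precisely where finite generation is used, and it is the step that would fail if $B(\mathcal{Q})$ were merely countable rather than finite, which is consistent with the paper's framing of finite generation as a Noetherian-style hypothesis.
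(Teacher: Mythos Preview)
Your proof is correct and follows essentially the same approach as the paper: limit completeness places each $\WC(q)$ inside $D(\infty)$, finite generation supplies a finite $G$ containing all certificates, a max over per-element depths together with monotonicity yields $G \subseteq D(K)$, and soundness finishes. Your replacement of $G$ by $G \cap D(\infty)$ is a careful touch that the paper glosses over (it asserts $G \subseteq D(\infty)$ directly, which is only justified if one tacitly takes $G = B(\mathcal{Q})$), but the overall structure is identical.
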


\begin{proof}
By finite generation, there exists a finite set $G \subseteq U$ such that $\WC(q) \subseteq G$ for all $q \in \mathcal{Q}$. By limit completeness and the assumption that $\Feas(q,\infty)$ holds for all $q \in \mathcal{Q}$, we have $\WC(q) \subseteq D(\infty)$ for all $q$, and hence $G \subseteq D(\infty)$.

By definition of $D(\infty)$, for each $g \in G$ there exists $k_g \ge 1$ such that $g \in D(k_g)$.
Let
\[
K := \max_{g \in G} k_g.
\]
By monotonicity of retrieval, $G \subseteq D(K)$.

Therefore, for every $q \in \mathcal{Q}$ we have $\WC(q) \subseteq D(K)$, and by certificate soundness $\Feas(q,K)$ holds.
\end{proof}

\begin{remark}
Theorem~\ref{thm:finite-gen-implies-unr} isolates finite generation of witness certificates as the structural condition required for uniform retrieval budgets.
Without this condition, uniform bounds may fail even under monotone retrieval.
\end{remark}

\section{Negative Results and Limitations}

\subsection{Failure under Non-Monotone Truncation}

We first show that monotonicity of retrieval is not merely a technical convenience, but a necessary condition for feasibility preservation.
In the absence of monotone truncation, feasibility in the limit need not imply feasibility at any finite depth, even for a single query.

\begin{proposition}[Non-Monotone Truncation Can Destroy Feasibility]
\label{prop:non-monotone-failure}
There exists a query $q$ and a retrieval procedure such that $\Feas(q,\infty)$ holds, but $\Feas(q,k)$ fails for all finite $k$.
\end{proposition}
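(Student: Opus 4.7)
The plan is to prove the proposition by an explicit construction: exhibit the smallest possible query together with a retrieval procedure that oscillates between disjoint slot-covering sets, so that the union contains a witness but no finite stage does.

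Concretely, I would take $U := \{a_1, a_2\}$ and define a query $q$ with two slots, setting $A_1(q) := \{a_1\}$, $A_2(q) := \{a_2\}$, and $\mathcal{R}_q := \{(a_1, a_2)\}$. This forces every witness to contain both evidence items simultaneously. Then I would define the retrieval procedure by
\[
D(k) := \begin{cases} \{a_1\} & k \text{ odd},\\ \{a_2\} & k \text{ even}, \end{cases}
\]
which is manifestly non-monotone since $a_1 \in D(1)$ but $a_1 \notin D(2)$, and symmetrically for $a_2$.

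The verification then splits into two routine checks. First, $D(\infty) = \{a_1\} \cup \{a_2\} = U$, so the unique witness $(a_1,a_2) \in \mathcal{R}_q$ lies in $D_1(q,\infty) \times D_2(q,\infty)$, giving $\Feas(q,\infty)$. Second, for each finite $k$ one of $D_1(q,k)$ or $D_2(q,k)$ is empty by construction, so no tuple in $\mathcal{R}_q$ can be realized inside $D(k)$, whence $\Feas(q,k)$ fails.

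There is no real obstacle here; the only delicate point is to make the example as minimal as possible so that the failure is clearly attributable to non-monotonicity rather than to any pathology in the query structure or the universe. I would close with a brief remark contrasting this with Theorem~\ref{thm:monotone-implies-nr}: the same witness $(a_1,a_2)$ would be captured at depth $K = 2$ under monotone retrieval, and it is precisely the ability of non-monotone truncation to discard previously retrieved evidence that prevents the finite witnessability argument from going through.
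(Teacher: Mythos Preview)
Your construction is correct and is essentially identical to the paper's own proof: a two-element universe, a two-slot query with a unique witness requiring both elements, and an alternating retrieval sequence that exposes one element on odd depths and the other on even depths. The only difference is cosmetic naming ($a_1,a_2$ versus $a,b$), and your closing remark contrasting with Theorem~\ref{thm:monotone-implies-nr} matches the paper's post-proof commentary.
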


\begin{proof}
Let the universe of evidence be
\[
U := \{a,b\}.
\]
Consider a query $q$ with two slots $V(q)=\{1,2\}$ and admissible sets
\[
A_1(q) = \{a\}, \qquad A_2(q) = \{b\}.
\]
Define the compatibility relation
\[
\mathcal{R}_q := \{(a,b)\}.
\]
Thus, $q$ has a unique feasible witness $(a,b)$.

Now define a retrieval procedure by the non-monotone sequence
\[
D(2k-1) := \{a\}, \qquad D(2k) := \{b\}
\]
for all $k \ge 1$.

Then
\[
D(\infty) = \{a,b\},
\]
and hence $\Feas(q,\infty)$ holds.
However, at every finite depth $k$, exactly one of $a$ or $b$ is present, so no feasible witness is contained in $D(k)$.
Therefore, $\Feas(q,k)$ fails for all finite $k$.
\end{proof}

\begin{remark}
Proposition~\ref{prop:non-monotone-failure} shows that even when a feasible witness exists in the limit, non-monotone truncation can prevent all of its components from co-occurring at any finite retrieval depth. This failure mode is eliminated by monotone retrieval, which ensures that once an evidence item appears it is never subsequently removed.
\end{remark}

\subsection{Failure of Uniform Bounds without Finite Generation}

We next show that finite generation of witness certificates is necessary for Uniform Noetherian Retrieval.
Even under monotone retrieval, the existence of feasible witnesses in the limit does not guarantee a uniform finite retrieval depth over a class of queries when certificates require infinitely many distinct evidence items.

\begin{proposition}[Failure of Uniform Bounds without Finite Generation]
\label{prop:non-finite-generation}
There exists a class of queries $\mathcal{Q}$ and a monotone retrieval procedure such that:
\begin{enumerate}[label=(\roman*)]
    \item $\Feas(q,\infty)$ holds for all $q \in \mathcal{Q}$,
    \item for every finite depth $K$, there exists $q \in \mathcal{Q}$ such that $\Feas(q,K)$ fails.
\end{enumerate}
Consequently, Uniform Noetherian Retrieval fails over $\mathcal{Q}$.
\end{proposition}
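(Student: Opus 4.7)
The plan is to build a simple diagonal-style counterexample in which each query in $\mathcal{Q}$ requires a fresh evidence item distinct from those required by any other query, forcing the retrieval depth needed to satisfy successive queries to grow without bound. I would take $U := \{a_n : n \ge 1\}$ and for each $n \ge 1$ introduce a single-slot query $q_n$ with $V(q_n) = \{1\}$, $A_1(q_n) = \{a_n\}$, and $\mathcal{R}_{q_n} = \{(a_n)\}$, so that $(a_n)$ is the unique witness of $q_n$. Setting $\mathcal{Q} := \{q_n : n \ge 1\}$ and $D(k) := \{a_1,\dots,a_k\}$ makes the retrieval procedure manifestly monotone, with $D(\infty) = U$.

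The verification of the two clauses is then nearly immediate. For (i), $a_n \in D(n) \subseteq D(\infty)$ shows $\Feas(q_n,\infty)$ for every $n$. For (ii), given any finite $K$, the unique witness of $q_{K+1}$ is $a_{K+1}$, and $a_{K+1} \notin D(K) = \{a_1,\dots,a_K\}$, so $\Feas(q_{K+1},K)$ fails. Together these directly negate the defining implication of UNR over $\mathcal{Q}$, yielding the claimed failure of uniform bounds.

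To frame the counterexample against the positive result, I would note that under any sound certificate assignment we must have $a_n \in \WC(q_n)$, whence $B(\mathcal{Q}) = \bigcup_n \WC(q_n)$ is infinite and $\mathcal{Q}$ is not finitely generated---precisely the hypothesis dropped here relative to Theorem~\ref{thm:finite-gen-implies-unr}. The main subtlety, and the only place that requires care, is that this construction inherently uses an infinite evidence universe, which is in mild tension with the finiteness of $U$ declared in Section~3; I would therefore state the counterexample over a countably infinite $U$, observing that the positive results only rely on per-query finitary structure (finite slot sets $V(q)$ and finite witness tuples), so relaxing global finiteness of $U$ in this section costs nothing.
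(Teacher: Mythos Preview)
Your construction is essentially identical to the paper's: the same countable universe, the same single-slot queries $q_n$ whose unique witness is the $n$th item, the same monotone retrieval $D(k)=\{a_1,\dots,a_k\}$, and the same diagonal argument that $q_{K+1}$ defeats depth $K$. One small slip in your framing remark: soundness alone only forces $\WC(q_n)\not\subseteq D(n-1)$, i.e.\ that $\WC(q_n)$ contains some $a_m$ with $m\ge n$, not necessarily $a_n$ itself---though this still makes $B(\mathcal{Q})$ infinite; the paper avoids the issue by simply \emph{defining} $\WC(q_i):=\{e_i\}$ and checking soundness and limit-completeness directly.
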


\begin{proof}
Let the universe of evidence be the countably infinite set
\[
U := \{e_1, e_2, e_3, \dots\}.
\]

Define a monotone retrieval procedure by
\[
D(k) := \{e_1, e_2, \dots, e_k\}.
\]

For each $i \ge 1$, define a query $q_i$ with a single slot $V(q_i)=\{1\}$, admissible set
\[
A_1(q_i) := \{e_i\},
\]
and compatibility relation
\[
\mathcal{R}_{q_i} := \{(e_i)\}.
\]
Thus, each query $q_i$ admits a unique feasible witness.

Define the witness certificate assignment by
\[
\WC(q_i) := \{e_i\}.
\]
This assignment is sound and limit-complete.

For every $i$, we have $e_i \in D(\infty)$, and hence $\Feas(q_i,\infty)$ holds.
However, for any fixed depth $K$, the query $q_{K+1}$ satisfies
\[
\WC(q_{K+1}) = \{e_{K+1}\} \not\subseteq D(K),
\]
and therefore $\Feas(q_{K+1},K)$ fails by certificate soundness.

Since the basis
\[
B(\mathcal{Q}) = \bigcup_{i \ge 1} \WC(q_i) = U
\]
is infinite, the query class $\mathcal{Q}$ is not finitely generated, and no uniform retrieval depth suffices.
\end{proof}

\begin{remark}
Proposition~\ref{prop:non-finite-generation} mirrors the classical failure of uniform bounds in non-Noetherian settings, where ascending chains do not stabilize.
Here, monotonicity alone guarantees per-query finite witnessability, but finite generation is required to obtain a uniform retrieval budget.
\end{remark}

\begin{remark}
The example can be truncated to arbitrarily large finite universes, showing that the failure persists in practical finite settings.
\end{remark}

\subsection{Slotwise Coverage Does Not Imply Feasibility}

Finally, we show that feasibility cannot in general be reduced to independent coverage of individual slots.
Even when each slot admits admissible evidence at a given retrieval depth, global feasibility may fail due to cross-slot compatibility constraints.

\begin{proposition}[Slotwise Coverage Is Insufficient]
\label{prop:slotwise-insufficient}
There exists a query $q$ and a retrieval depth $k$ such that:
\begin{enumerate}[label=(\roman*)]
    \item for every slot $i \in V(q)$, the induced domain $D_i(q,k)$ is non-empty, but
    \item $\Feas(q,k)$ fails.
\end{enumerate}
\end{proposition}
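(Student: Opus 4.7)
The plan is to construct a minimal counterexample in which every slot is individually covered at some retrieval depth, but no globally compatible witness is available. Since the proposition only claims existence, a small handcrafted instance suffices; the conceptual point is that the compatibility relation $\mathcal{R}_q$ is genuinely relational and can rule out precisely the tuple of items that slotwise coverage would assemble.

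Concretely, I would take $U = \{a_1, a_2, b_1, b_2\}$ and a query $q$ with two slots, $V(q) = \{1,2\}$, admissible sets $A_1(q) = \{a_1, a_2\}$ and $A_2(q) = \{b_1, b_2\}$, and compatibility relation
\[
\mathcal{R}_q := \{(a_1, b_1),\ (a_2, b_2)\}.
\]
Next I would define a retrieval procedure with $D(1) := \{a_1, b_2\}$ (extended arbitrarily for $k \ge 2$; only the value at depth $k=1$ matters). Then the induced slotwise domains at depth $1$ are $D_1(q,1) = \{a_1\}$ and $D_2(q,1) = \{b_2\}$, both non-empty, verifying condition (i).

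For condition (ii), I would observe that any witness contained in $D(1)$ must be the pair $(a_1, b_2)$, since this is the only tuple in $A_1(q) \times A_2(q)$ whose entries lie in $D(1)$. But $(a_1, b_2) \notin \mathcal{R}_q$, so no witness is available and $\Feas(q,1)$ fails. This completes the construction.

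There is no genuine obstacle: the proof is a verification that the chosen compatibility relation excludes exactly the cross product of the slotwise domains. The only substantive design choice is making the example minimal and transparent; I prefer the four-item universe above because it foregrounds the essential mechanism—that slotwise coverage and relational compatibility are independent properties—without auxiliary structure. A remark could note that the same phenomenon persists for any arity $m_q \ge 2$ and any non-trivial $\mathcal{R}_q$ strictly contained in $\prod_i A_i(q)$.
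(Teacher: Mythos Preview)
Your construction is correct and follows the same template as the paper: the same four-element universe, the same two-slot query, and the same diagonal compatibility relation $\mathcal{R}_q=\{(a_1,b_1),(a_2,b_2)\}$. The only difference is the retrieved set. You take $D(1)=\{a_1,b_2\}$, so the unique candidate tuple is $(a_1,b_2)\notin\mathcal{R}_q$ and $\Feas(q,1)$ fails as claimed. The paper instead takes $D(k)=\{a_1,a_2,b_1\}$; but with that choice the witness $(a_1,b_1)\in\mathcal{R}_q$ lies entirely in $D(k)$, so $\Feas(q,k)$ actually \emph{holds} and the paper's argument as written is in error. Your anti-diagonal selection is exactly the minimal fix, and your verification that $(a_1,b_2)$ is the only available tuple is the right way to close the argument.
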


\begin{proof}
Let the universe of evidence be
\[
U := \{a_1, a_2, b_1, b_2\}.
\]
Consider a query $q$ with two slots $V(q)=\{1,2\}$ and admissible sets
\[
A_1(q) := \{a_1, a_2\}, \qquad A_2(q) := \{b_1, b_2\}.
\]

Define the compatibility relation
\[
\mathcal{R}_q := \{(a_1,b_1), (a_2,b_2)\}.
\]
Thus, admissible evidence items exist for each slot individually, but only matched pairs are jointly feasible.

Let the retrieval procedure satisfy
\[
D(k) := \{a_1, a_2, b_1\}.
\]
Then the induced domains satisfy
\[
D_1(q,k) = \{a_1, a_2\}, \qquad D_2(q,k) = \{b_1\},
\]
so both slots are locally covered.
However, no witness in $\mathcal{R}_q$ is contained entirely in $D(k)$, since $(a_2,b_2)$ requires $b_2 \notin D(k)$.
Therefore, $\Feas(q,k)$ fails.
\end{proof}

\begin{remark}
Proposition~\ref{prop:slotwise-insufficient} demonstrates that feasibility is inherently a global property.
Local coverage of individual slots does not guarantee the existence of a compatible witness.
This motivates modeling query constraints via joint compatibility relations and, correspondingly, the use of witness certificates rather than per-slot relevance criteria.
\end{remark}

\section{Complexity and Scalability}

We briefly address computational considerations and clarify how the feasibility framework studied in this paper relates to classical complexity results for constraint satisfaction problems.

\paragraph{Relation to CSP hardness.}
The feasibility problem induced by a query can be viewed as a finite constraint satisfaction problem (CSP).
In full generality, CSPs and, in particular, $k$-SAT are NP-hard.
However, the results of this paper do not concern the complexity of solving arbitrary CSPs.

Our guarantees are structural rather than algorithmic.
They characterize conditions under which feasibility is preserved under truncation, not worst-case bounds for deciding feasibility.

\paragraph{Bounded and parameterized setting.}
The operational regime considered here is highly restricted.
Each query induces:
\begin{itemize}[leftmargin=2em]
    \item a fixed and typically small number of slots,
    \item candidate domains bounded by retrieval depth,
    \item an existential feasibility check that terminates upon finding a single witness.
\end{itemize}

Under these conditions, feasibility checking is fixed-parameter tractable with respect to the number of slots and the sizes of the retrieved domains.
The dominant computational cost lies in retrieval, not in feasibility validation.

\paragraph{Separation of retrieval and validation costs.}
Retrieval determines the candidate evidence sets $D(k)$ and scales with corpus size.
Validation operates on the query-local structure induced by $D(k)$ and is independent of corpus size once retrieval is fixed.
Our results concern the correctness of this separation: whether retrieval truncation preserves the existence of feasible witnesses.

\paragraph{Noetherian guarantees versus tractability.}
Noetherian properties, such as finite witnessability and finite generation, guarantee the existence of finite certificates and uniform bounds.
They do not, by themselves, imply efficient computation of such bounds.
This distinction mirrors classical results in algebra and logic, where compactness ensures finite witnesses without guaranteeing computational efficiency.

In practice, validated retrieval exploits Noetherian structure as a correctness invariant rather than as a computational procedure.
By enforcing feasibility-preserving truncation, the system avoids failures due to missing compatible evidence without incurring combinatorial explosion.

\paragraph{Summary.}
Although general CSP feasibility is NP-hard, the feasibility preservation guarantees studied here apply to a bounded, retrieval-driven regime.
They ensure soundness of truncation without requiring solution of large or unstructured constraint systems.

\section{Discussion}

This work reframes retrieval correctness in terms of feasibility preservation rather than relevance optimization.
Under this view, failures of retrieval-based systems arise not primarily from inaccurate scoring, but from structural violations that prevent compatible evidence from co-occurring under truncation.

\paragraph{Feasibility versus relevance.}
Classical retrieval pipelines are designed to optimize marginal relevance of individual evidence items.
Our results show that such optimization is orthogonal to feasibility: highly relevant items may fail to jointly satisfy query constraints if truncation disrupts their co-occurrence.
In contrast, feasibility is an existential, global property.
Either a compatible witness exists among the retrieved evidence or it does not.

This distinction explains why retrieval systems can fail even when all necessary information is present in the corpus.
The failure lies not in relevance, but in the structure of truncation.

\paragraph{Monotonicity as a correctness invariant.}
Theorem~\ref{thm:monotone-implies-nr} identifies monotonicity as the minimal structural condition required to preserve feasibility for individual queries.
Without monotonicity, truncation may arbitrarily discard previously retrieved evidence, destroying feasible witnesses even when they exist in the limit.
Monotone retrieval ensures that once an evidence item becomes available, it remains available, enabling finite witnessability.

Importantly, monotonicity is not an algorithmic optimization but a correctness invariant.
It constrains how approximation is performed, not how evidence is ranked.

\paragraph{Uniform bounds and finite generation.}
Uniform retrieval budgets require stronger structure.
Theorem~\ref{thm:finite-gen-implies-unr} shows that finite generation of witness certificates is sufficient to guarantee a single finite depth that validates an entire class of queries.
Proposition~\ref{prop:non-finite-generation} demonstrates that this condition is also necessary.

This separation mirrors classical Noetherian phenomena: compactness ensures finite witnesses for individual instances, while finite generation is required to obtain uniform bounds across families.
In retrieval settings, this distinction clarifies why some query classes admit stable validation budgets while others inherently require unbounded expansion.

\paragraph{Global constraints and witness structure.}
Proposition~\ref{prop:slotwise-insufficient} highlights that feasibility is fundamentally relational.
Local coverage of individual slots does not guarantee global compatibility.
This motivates modeling query constraints via joint compatibility relations and validates the use of witness certificates that capture global structure rather than per-slot relevance.

In practical terms, this explains why systems that evaluate retrieval quality solely via per-slot or per-field coverage may still fail catastrophically on composed queries.

\paragraph{Implications for validated retrieval.}
Taken together, these results suggest a principled notion of validated retrieval, in which
retrieval procedures should be evaluated by whether they preserve feasibility under truncation rather than solely by relevance metrics.
Monotonicity and certificate-based reasoning provide simple, interpretable criteria for correctness, while the negative results identify sharp boundaries beyond which uniform guarantees are impossible.

\paragraph{Outlook.}
The framework developed here isolates feasibility preservation as a structural problem independent of specific retrieval architectures.
Future work may extend this analysis to richer certificate structures, alternative notions of truncation, or settings where feasibility interacts with probabilistic or approximate constraints.
More broadly, we expect feasibility-based reasoning to provide a useful lens for understanding and diagnosing failures in retrieval-augmented systems beyond the specific models considered here.

\section{Related Work}

\paragraph{Constraint satisfaction.}
The formal model used in this paper is closely related to classical constraint satisfaction problems (CSPs), where feasibility is defined as the existence of an assignment satisfying a set of constraints.
General CSP feasibility, including $k$-SAT, is NP-hard in the worst case~\cite{garey1979computers}.
Our work does not address the complexity of solving arbitrary CSPs, but instead studies structural conditions under which feasibility is preserved under truncation of variable domains.
This perspective is complementary to traditional CSP analysis, which typically focuses on algorithmic tractability rather than sound approximation under resource constraints.

\paragraph{Database query evaluation.}
The feasibility checks considered here are closely related to conjunctive query evaluation in relational databases, where query answering corresponds to checking the existence of tuples satisfying join constraints~\cite{abiteboul1995foundations}.
As in database systems, practical tractability arises from bounded query arity and aggressive pruning of candidate domains.
Our results can be viewed as identifying correctness conditions for truncation-based approximations of such existential queries.

\paragraph{Abstract interpretation and monotone approximation.}
The use of monotone truncation to preserve correctness is reminiscent of abstract interpretation and related frameworks in program analysis~\cite{cousot1977abstract}.
In these settings, soundness is ensured by over-approximating program behaviors via monotone abstractions.
Our notion of monotone retrieval plays an analogous structural role: it constrains how evidence is approximated so that feasibility properties are preserved.

\paragraph{Information retrieval.}
Classical information retrieval systems are designed to optimize relevance of individual documents or passages~\cite{manning2008introduction}.
Recent retrieval-augmented systems inherit this focus, often evaluating retrieval quality via marginal relevance metrics.
Our work highlights a complementary notion of correctness based on joint feasibility, showing that relevance optimization alone does not guarantee the preservation of compatible evidence under truncation.

\paragraph{Noetherian and compactness principles.}
The finite witnessability and finite-generation conditions identified in this paper are conceptually analogous to Noetherian and compactness principles in algebra and logic, where global properties are witnessed by finite subsets~\cite{atiyah1969introduction}.
While we do not invoke algebraic machinery directly, these analogies help clarify the distinction between existence guarantees and computational tractability.

\end{document}